%% file: main.tex
\documentclass[letterpaper]{article} 
\usepackage[preprint]{aaai2027}
\usepackage[hyphens]{url}  
\usepackage{graphicx} 
\usepackage{natbib}  
\usepackage{caption} 
\usepackage{amsmath, amssymb, amsthm, booktabs, xcolor}
\usepackage{algorithm}
\usepackage[noend]{algpseudocode}
\newtheorem{definition}{Definition}
\newtheorem{proposition}{Proposition}
\graphicspath{{bfa_figures/}}
\newcommand{\pizero}{$\pi_0$}
\title{Bit-Flip Attacks on Vision-Language-Action Models:\\ Action-Decoding Architecture Shapes the Vulnerability}
\input{author_block}

\begin{document}
\maketitle

\begin{abstract}
Quantized Vision-Language-Action (VLA) models expose a weight-fault surface: Rowhammer-style faults can corrupt deployed INT8 bits. We present the first bit-flip attack on a VLA: a few gradient-selected flips reduce closed-loop success to $0\%$, while hundreds of random flips are harmless. Across four model variants spanning three action-head families, damaging bits concentrate in a few action-generating layers, but the empirical budget depends sharply on the head: direct regression and token policies fall in $1$--$5$ flips, whereas the evaluated flow-matching policies require ${\sim}100$--$300$. Our fixed-direction manifold-escape loss cuts \pizero{}'s budget from ${\sim}1000$ to ${\sim}100$ flips, and a matched five-direction sweep shows that the attack is not specific to an all-positive direction. On a direct head, protecting $3.1\%$ of weights preserves $60\%$ success at $K{=}100$, and protecting $5.3\%$ moves the open-loop break threshold from 3 to 100 flips. Finally, task-calibrated emulated $K{=}100$ flips yield $0/20$ real-robot successes, versus $14/20$ clean and $16/20$ global-random. Weight integrity is therefore a security boundary for embodied foundation models. Code is included as ancillary material.
\end{abstract}

\section{Introduction}
Vision-Language-Action (VLA) models such as OpenVLA \citep{openvla}, \pizero{} \citep{pizerocite}, and RT-2 \citep{rt2} map images and instructions directly to robot actions. Existing attacks target inputs or training-time backdoors \citep{gao2024dual}, leaving deployed weights intact. Yet low-precision edge deployment \citep{openvla} makes integer weights a target for Rowhammer-style faults. A few flips have crushed quantized classifiers \citep{bfa,deephammer} and, more recently, LLMs \citep{flipllm,dnl}, but only at categorical outputs. A classifier fault is read out once; a VLA fault passes through an action decoder and then repeatedly interacts with a changing environment. Decoder dynamics may attenuate the immediate perturbation, while modest residual errors can compound through feedback. Classifier BFA budgets and open-loop deviations therefore do not predict closed-loop robot failure. We ask: how few weight flips cause such failure, and how can they be contained?

\begin{figure}[t]\centering\includegraphics[width=\linewidth]{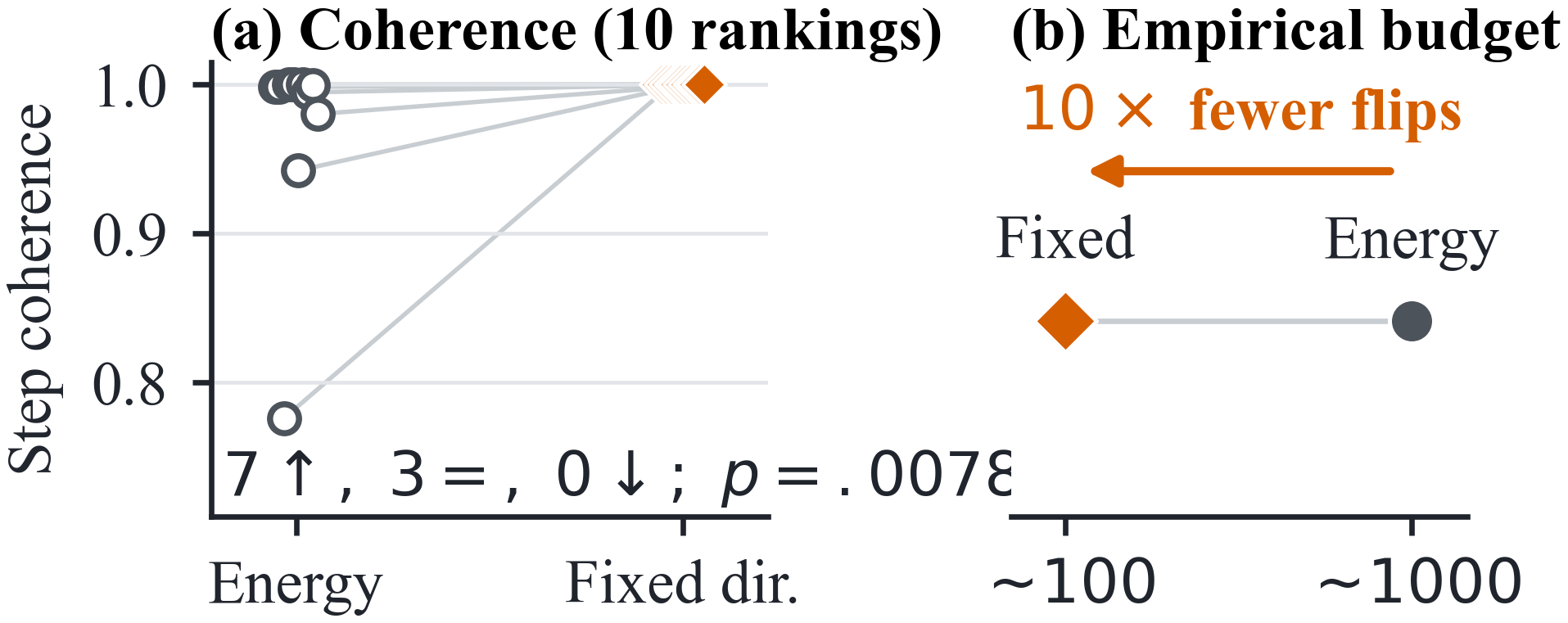}
\caption{\textbf{Fixed-direction insight.} Across ten rerankings it is never less coherent than energy (7 higher, 3 ties) and lowers \pizero{}'s empirical collapse budget \(10\times\).}\label{fig:insight}\end{figure}

On LIBERO-Spatial, 3 selected flips collapse simulated closed-loop success from 88\% to 0\% ($n{=}50$); $1$--$5$ flips suffice across four suites, while 300 random flips are harmless (Table~\ref{tab:sr}). The evaluated flow heads instead require ${\sim}100$--$300$ flips. On a real 6-DoF arm, task-calibrated emulated $K{=}100$ perturbations yield $0/20$ successes, versus $14/20$ clean and $16/20$ global-random (\S\ref{ss:attack}).

Figure~\ref{fig:insight} motivates our flow-head objective: across ten independent \(K{=}100\) rerankings, fixed direction is never less coherent than energy (seven higher, three ties; one-sided Wilcoxon \(p{=}.0078\)) and its same-policy empirical collapse budget is \(10\times\) lower. We therefore optimize the same nonzero, one-sided executed-action projection on every calibration trace, seeking perturbations with a common action-space orientation. Backpropagating this objective through the complete decoder scores each INT8 bit by its quantization-aware directional gain.

The complementary \emph{where} insight is spatial: damaging bits concentrate in a few action-generating layers (Fig.~\ref{fig:soft}). Together, these observations guide both attack and defense. Gradient ranking discovers high-leverage layers, and a matched sweep confirms that fixed-direction manifold escape is not peculiar to the all-positive default (\S\ref{ss:flowanalysis}). Conversely, protecting L1 ($3.1\%$) preserves closed-loop success through $K{=}100$; adding the action head ($5.3\%$ total) moves the open-loop threshold from 3 to 100.

\paragraph{Contributions.}
\textbf{(1)~The first weight-fault attack on a VLA.} To our knowledge, this is the first bit-flip attack on a VLA and a continuous, closed-loop action policy. Just $1$--$5$ selected INT8 flips collapse simulated task success to $0\%$, revealing closed-loop consequences absent from prior attacks on categorical outputs.
\textbf{(2)~An empirical vulnerability spectrum and an architecture-aware attack.} Across direct-regression, discrete-token, and flow-matching decoders, we localize damaging bits to high-leverage action layers. Fixed-direction manifold escape breaks the evaluated flow heads with ${\sim}10\times$ fewer flips than isotropic energy, and a matched five-direction sweep rejects an all-positive-direction artifact.
\textbf{(3)~Broad evaluation and localized protection.} Across five simulation checkpoints, four LIBERO suites, and SimplerEnv, attack budgets range from $1$--$5$ flips for direct heads to ${\sim}100$--$300$ for flow heads. A separate real-trained $\pi_{0.5}$ with task-calibrated emulated $K{=}100$ perturbations yields $0/20$ real-robot successes. On a direct head, $3.1\%$ protection preserves $60\%$ success at $K{=}100$, and $5.3\%$ protection raises the open-loop threshold $33\times$.

\section{Related Work}

\paragraph{Security of VLAs and generative robot policies.} AttackVLA and ANNIE study input-side VLA attacks \citep{attackvla,annie}; BadVLA, TrojanRobot, and DropVLA study training- or supply-chain backdoors, including action-level backdoors \citep{badvla,trojanrobot,dropvla}. Related work also exposes oracle-level integrity attacks on the imagined trajectories \citep{chen2026attacking}. For diffusion and flow policies \citep{diffusionpolicy,flowmatching}, DP-Attacker perturbs inputs and identifies the visual encoder as vulnerable \citep{dpattacker}, while TrojFlow implants triggers \citep{trojflow}. Broader visual-security work studies stealthy backdoors, retraining-free backdoor removal, black-box reconstruction defenses, and input-triggered manipulation of 3D Gaussian-splatting models \citep{gao2024dual,gao2024energy,yu2025blackbox,,lu2026privacy,wu2025remisvfu,gao2026ghostsplat}. We instead corrupt served weights: damage concentrates in the action expert, and the evaluated flow heads require larger directed budgets (\S\ref{ss:underbelly}--\ref{ss:flowanalysis}).

\paragraph{Bit flips and quantized deployment.} BFA/PBS gradient-ranks INT8 bits, collapsing classifiers with ${\sim}11$--$17$ flips while hundreds of random flips do little \citep{bfa}; targeted variants install chosen behavior \citep{tbfa}. Recent work scales search to LLMs \citep{flipllm}, finds sparse sign-bit lesions that motivate selective protection \citep{dnl}, and reports block-localized BFA sensitivity in vision transformers \citep{vitbfa}. DeepHammer realizes DDR3/4 faults \citep{deephammer}, while GPUHammer and GDDRHammer extend Rowhammer to GPU memory \citep{gpuhammer,gddrhammer}. Defenses include binarization, clipping, and selective checksums/ECC \citep{rabnn,bfadefense,radar,bitshield}. Complementary model protection addresses ownership verification through expert-routing path watermarks in MoE LLMs \citep{gao2026pathmark}. VLA quantization studies efficiency \citep{bitvla,quantvla,actquant}, while quantization-conditioned backdoors poison weights before release \citep{quantibackdoor}; neither addresses served-weight faults in continuous closed-loop action.

\section{Preliminaries}\label{sec:prelim}
\paragraph{Quantized VLA policies.}
A VLA policy $\pi_\theta:\mathcal{O}\!\to\!\mathcal{A}$ maps $o=(\text{image},\text{instruction})$ to an $H$-step, $d$-dimensional action chunk $a_\theta(o)$. Its action-generating transformer has weights $\theta=\{W^{(\ell)}\}_{\ell}$. We evaluate three decoders: a \emph{direct-regression} head, $a_\theta=W_oh_\theta(o)+b_o$; a \emph{discrete-token} head with $C=256$ bins, $a_{h,r}=\operatorname{decode}(\arg\max_c z_{h,r,c}(o))$; and a \emph{flow-matching} head that transforms $x_1\sim\mathcal{N}(0,I)$ into $a=x_0$:
\begin{equation}
\begin{aligned}
a_\theta(o,x_1)&=x_1+\int_1^0 v_\theta(x_t,t,c_o)\,dt\\
&\approx x_1+\sum_{k=1}^{N}\Delta t\,v_\theta(x_{t_k},t_k,c_o),
\qquad \Delta t=-1/N .
\end{aligned}
\label{eq:pi0}
\end{equation}
The final flow action therefore depends on a sequence of solver updates rather than a single readout.

\paragraph{INT8 logical faults.}
Every eligible linear weight is quantized per output channel; stored integers are faultable, while scales and activations remain intact.
\begin{definition}[Per-channel symmetric INT8 quantization]\label{def:quant}
For $W\in\mathbb{R}^{m\times n}$, let $s_i=\max\{\max_j|W_{ij}|/127,10^{-8}\}$ and $q_{ij}=\operatorname{clip}_{[-128,127]}(\operatorname{round}(W_{ij}/s_i))$. The stored $q_{ij}$ uses 8-bit two's-complement representation, and inference dequantizes $\widehat W_{ij}=q_{ij}s_i$. An all-zero row maps to $q_{ij}=0$.
\end{definition}
For the OFT policy, clean quantization changes the action by only $\|a_{\widehat\theta}(o)-a_\theta(o)\|=0.003$, well below its action scale.
\begin{definition}[Bit flip and induced weight perturbation]\label{def:flip}
Toggling $b\in\{0,\ldots,7\}$ maps $q_{ij}\mapsto\operatorname{flip}_b(q_{ij})$ and changes the dequantized weight by $\Delta_{ijb}=(\operatorname{flip}_b(q_{ij})-q_{ij})s_i$.
\end{definition}
Non-sign and sign flips change the integer by at most $64$ and $128$, respectively, so $|\Delta_{ijb}|\le128s_i$ and no INT8 flip directly produces $\infty$/NaN. Decoder-dependent propagation is therefore meaningful for these bounded faults.

\section{Threat Model}
\paragraph{Deployment and attacker.}
We consider an edge-served INT8 VLA whose model memory may be shared with an untrusted process. Following BFA \citep{bfa}, the attacker knows the architecture, stored weights, scales, and decoder, and has a small calibration set $\mathcal{D}_{\mathrm{cal}}$ disjoint from evaluation. It induces at most $K$ persistent linear-weight flips, one per selected scalar, but controls neither training, inputs, activations, scales, nor the environment. The baseline deployment lacks effective end-to-end memory integrity or ECC.

\paragraph{Goal and defender.}
The primary goal is to collapse closed-loop success with few flips; targeted diagnostics instead drive a chosen action. Selective protection removes defended weights from the candidate set and rebuilds the attack over the remainder, making defense evaluation adaptive rather than a replay.

\begin{figure*}[t]\centering\includegraphics[width=0.99\textwidth]{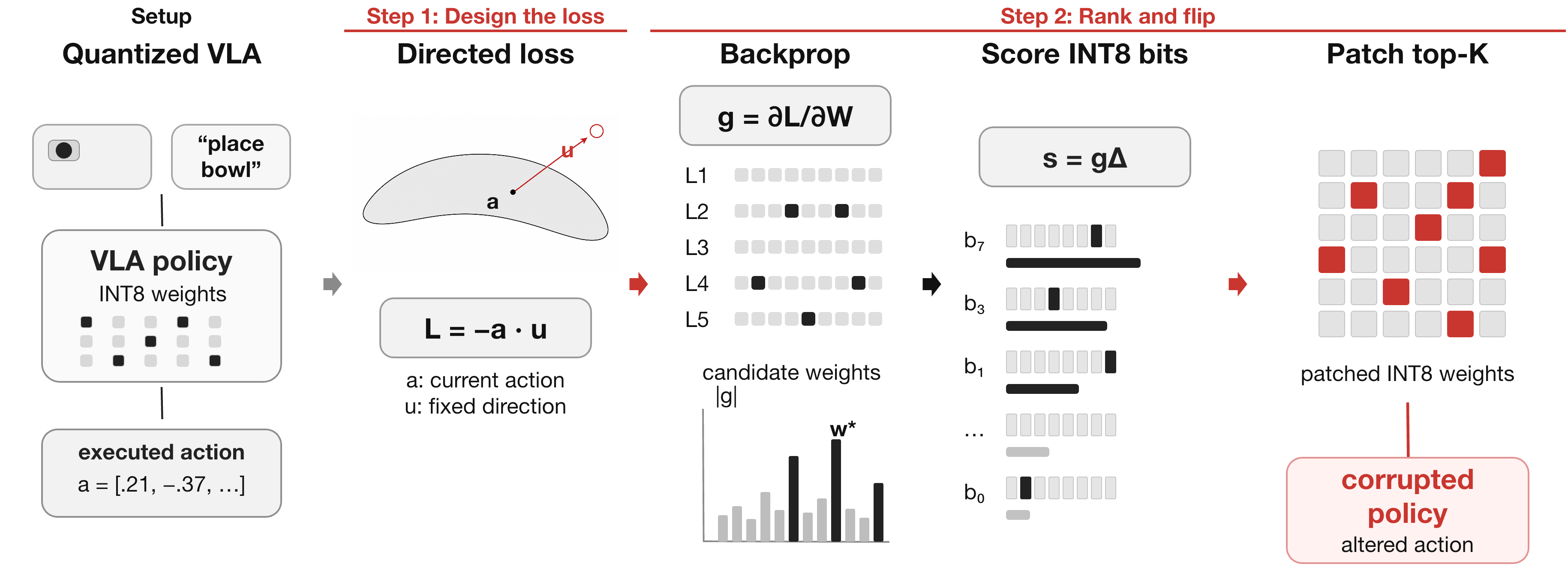}
\caption{\textbf{Fixed-direction manifold-escape attack.} The displayed $L=-a\cdot u$ is gradient-equivalent to the centered loss in Eq.~\eqref{eq:directional}. A bit score $s=g\Delta$ predicts its first-order loss change; ranking uses the corresponding gain $G=-s$ and discovers high-leverage layers without an architectural prior.}\label{fig:method}\end{figure*}

\paragraph{Physical-delivery boundary.}
DeepHammer maps profiled cells and induces targeted chains \citep{deephammer}; GPUHammer reports 8 flips across four GDDR6 banks, while GDDRHammer demonstrates cross-component attacks \citep{gpuhammer,gddrhammer}. Prior demonstrations reach fault counts comparable to our $1$--$5$ direct-head regime, but coordinate reachability remains device-, placement-, and fault-profile-dependent. We therefore evaluate logical INT8 corruption rather than end-to-end physical delivery; the ${\sim}100$--$300$ flow-head regime is not a deliverability claim.

\section{Method}
Our evidence points to two factors---how faults propagate through the decoder and where high-leverage weights lie---which become two design choices: an executed-action objective that remains effective through iterative decoding, and an unrestricted quantization-aware ranking that reveals sensitive locations. Step 1 specifies how the executed action should move, while Step 2 determines where that displacement can be induced most efficiently. The objective is decoder-aware; the location search remains architecture-agnostic. For $(\text{weight},\text{bit})$ pairs $\mathcal{F}$, define
\begin{equation}
\begin{aligned}
\delta\theta(\mathcal{F})
&=\sum_{((\ell,i,j),b)\in\mathcal{F}}
\Delta^{(\ell)}_{ijb}e^{(\ell)}_{ij},\\
\mathcal{F}^{\star}
&\in\arg\min_{|\mathcal{F}|\le K}
L\big(\widehat\theta+\delta\theta(\mathcal{F});\widehat\theta\big),
\end{aligned}
\label{eq:attack}
\end{equation}
where $e^{(\ell)}_{ij}$ is the unit weight-coordinate and the clean-reference argument is omitted when unnecessary. For $N_w$ eligible scalar weights, exact optimization over ${\sim}8N_w$ bit candidates is intractable, so we use two steps (Fig.~\ref{fig:method}).

\noindent\emph{Step 1: design the loss} (\S\ref{ss:obj}). We differentiate the final executed action for every decoder and construct a fixed-direction manifold-escape loss that remains effective through flow sampling.

\noindent\emph{Step 2: rank and flip} (\S\ref{ss:search}). Following gradient-ranked bit search \citep{bfa}, we combine the action gradient with each exact dequantized bit change. All action-generating linears remain eligible, so localization is an outcome rather than a prior. Algorithm~\ref{alg:pbs} gives the procedure.

\subsection{Step 1: a fixed-direction manifold-escape loss}\label{ss:obj}
\paragraph{Differentiable executed actions.}
We construct an attack-time differentiable action $\widetilde a_\theta$ for each decoder. Direct regression uses $\widetilde a_\theta=a_\theta$. For discrete tokens, only the ranking gradient replaces $\arg\max$ with the expected decoded value $\widetilde a_{h,r}=\sum_c\operatorname{softmax}(z_{h,r})_c\nu_c$, where $\nu_c$ denotes bin $c$; evaluation retains the hard decoder. A flow policy uses $\widetilde a_\theta=a_\theta=x_0$, backpropagated through all solver steps in Eq.~\eqref{eq:pi0}. Because the attack targets executed action rather than field fit, we use this sampling gradient and average over $M=2$--$3$ fixed noise draws; deterministic heads use $M=1$.

\paragraph{Centered directional objective.}
Several natural ports fail on the evaluated flow head. Direct squared deviation, $-\|\widetilde a_\theta-\widetilde a_{\widehat\theta}\|_2^2$, has zero gradient at $\theta=\widehat\theta$ and cannot initialize first-order search. Driving a fixed valid action saturates empirically, while isotropic energy requires ${\sim}10\times$ more flips (\S\ref{ss:flowanalysis}). We instead detach the clean action and optimize a one-sided displacement:
\begin{definition}[Fixed-direction manifold-escape objective]\label{def:obj}
For a nonzero direction $u\in\mathbb{R}^{H\times d}$, define
\begin{equation}
\begin{aligned}
d_\theta^{(m)}(o)
&=\widetilde a_\theta(o,x_1^{(m)})
-\operatorname{sg}\!\left[\widetilde a_{\widehat\theta}(o,x_1^{(m)})\right],\\
L_{\mathrm{dir}}(\theta;\widehat\theta)
&=-\frac{1}{|\mathcal{D}_{\mathrm{cal}}|M}
\sum_{o\in\mathcal{D}_{\mathrm{cal}}}\sum_{m=1}^{M}
\langle d_\theta^{(m)}(o),u\rangle ,
\end{aligned}
\label{eq:directional}
\end{equation}
where $\operatorname{sg}$ denotes stop-gradient and the noise input is omitted for deterministic heads.
\end{definition}
Although $L_{\mathrm{dir}}(\widehat\theta;\widehat\theta)=0$, detaching the reference gives the same nonzero gradient as $-\langle\widetilde a_\theta,u\rangle$ while making clean-action displacement explicit. We use $u=\mathbf{1}$ and test $-\mathbf{1}$ plus three Gaussian directions in \S\ref{ss:flowanalysis}; positive rescaling does not change ranking. Sharing $u$ gives traces a common action-space orientation without fitting a geometric manifold or forcing per-step gradient signs. Targeted diagnostics instead use $L_{\mathrm{tgt}}=\mathbb{E}_o\|\widetilde a_\theta(o)-a^\star\|_2^2$.

\subsection{Step 2: quantization-aware bit ranking}\label{ss:search}
Given $L_{\mathrm{dir}}$, which bits best solve Eq.~\eqref{eq:attack}? Let
\begin{equation}
\begin{aligned}
g^{(\ell)}_{ij}
&=\left.\frac{\partial L_{\mathrm{dir}}}
{\partial W^{(\ell)}_{ij}}\right|_{\theta=\widehat\theta},\\
s^{(\ell)}_{ijb}
&=g^{(\ell)}_{ij}\Delta^{(\ell)}_{ijb}
\approx \Delta L^{(\ell)}_{ijb},\qquad
G^{(\ell)}_{ijb}=-s^{(\ell)}_{ijb}.
\end{aligned}
\label{eq:rank}
\end{equation}
The score $s$ predicts the loss change; $G$ is positive when a flip improves the minimizing attack. Since $L_{\mathrm{dir}}$ is a fixed action projection,
\begin{equation}
G^{(\ell)}_{ijb}
=\Delta^{(\ell)}_{ijb}\,
\mathbb{E}_{o,m}\!\left[
\left\langle
\frac{\mathrm{d}\widetilde a_\theta}
{\mathrm{d}W^{(\ell)}_{ij}},u
\right\rangle\right]_{\theta=\widehat\theta}.
\label{eq:gsens}
\end{equation}
Thus $G$ combines directional action sensitivity with the quantization-dependent flip magnitude; it is a projection, not the full sensitivity norm. For each scalar weight we keep $b^\star_{ij}=\arg\max_b G_{ijb}$, then select the $K$ largest positive per-weight gains. Directional leverage is highly nonuniform, so this unrestricted ranking concentrates in a few action-generating layers without being told where to search (\S\ref{ss:underbelly}).

The search requires one backward pass and scores eight bits per eligible weight, giving $O(8N_w)=O(N_w)$ scoring cost. Its one-shot gradient ignores interactions among selected flips; progressive re-ranking can capture some interactions at higher cost, and \S\ref{ss:attack} verifies that the low-budget discrete result survives canonical PBS.

\begin{algorithm}[t]
\caption{Quantization-aware fixed-direction bit search}\label{alg:pbs}
\begin{algorithmic}[1]
\Require quantized weights $\{q^{(\ell)},s^{(\ell)}\}$, candidates $\mathcal{C}$, calibration set $\mathcal{D}_{\mathrm{cal}}$, direction $u$, noise draws $M$, budget $K$
\State $g\gets\left.\nabla_W L_{\mathrm{dir}}(\theta;\widehat\theta)\right|_{\theta=\widehat\theta}$
\For{each weight $(\ell,i,j)\in\mathcal{C}$}
  \State compute $\Delta^{(\ell)}_{ijb}$ and $G^{(\ell)}_{ijb}=-g^{(\ell)}_{ij}\Delta^{(\ell)}_{ijb}$ for $b=0,\ldots,7$
  \State $b^{\star(\ell)}_{ij}\gets\arg\max_bG^{(\ell)}_{ijb}$,\quad $G^{(\ell)}_{ij}\gets G^{(\ell)}_{ijb^{\star(\ell)}_{ij}}$
\EndFor
\State $\mathcal{F}\gets$ top-$K$ positive-gain weights paired with $b^\star$
\State \Return $\mathcal{F}$ and $\widehat\theta+\delta\theta(\mathcal{F})$
\end{algorithmic}
\end{algorithm}
The same algorithm collapses the discrete policy at $K{=}3$, drives OFT's action a full scale off, and, with the primary $u=\mathbf{1}$, collapses \pizero{} at $K{\approx}100$: one recipe whose empirical budget varies sharply with the action decoder.

\section{Experiments}
\noindent\textbf{Overview.} Quantization makes the policy look safe: naive bit-flips leave the action essentially unchanged. Yet a gradient-ranked search that maximizes a fixed executed-action projection collapses closed-loop task success to zero, and the required budget spans two orders of magnitude across architectures. We establish the attack across this spectrum (\S\ref{ss:attack}), show that protecting ${\sim}3$--$5\%$ of weights substantially raises the tested budget of a direct head (\S\ref{ss:defense}), and analyze localization and flow-head objectives (\S\ref{ss:underbelly}--\ref{ss:flowanalysis}).

\subsection{Experimental setup}\label{sec:setup}
\textbf{Models.} We study three action-head families across four simulation model variants and five simulation checkpoints ($3$--$7$B, the $7$B OpenVLA family and the ${\sim}3.3$B \pizero{}/$\pi_{0.5}$): (i) OpenVLA-OFT \citep{openvlaoft} (continuous $L_1$-regression head, 8-step chunks), (ii) discrete OpenVLA (256-bin action tokens emitted by the LLM head), (iii) \pizero{} (flow-matching with a separate action expert, $N{=}10$ denoising steps), and (iv) $\pi_{0.5}$ \citep{pi05cite} (a newer flow-matching VLA with a distinct architecture and weights, used to test whether the flow-matching findings generalize beyond a single model). (i)--(iv) are LIBERO-finetuned. For the cross-benchmark transfer we additionally use (v) the OXE-pretrained \texttt{openvla-7b} base policy (discrete token head, Google-Robot/Bridge embodiment). The real-robot study separately uses a real-trained $\pi_{0.5}$ checkpoint.

\textbf{Benchmarks.} Closed-loop evaluation uses LIBERO \citep{libero} (four suites: Spatial, Object, Goal, Long) and SimplerEnv \citep{simplerenv} (ManiSkill2/SAPIEN, Google-Robot pick-coke-can and move-near).

\textbf{Quantization and attack surface.} Weights are quantized per Def.~\ref{def:quant}. The attacker's candidate set is every linear layer of the action-generating transformer (all decoder layers plus the action head/expert, $228$--$437$ matrices), a realistic Rowhammer surface rather than the small head alone. The ranking gradient is accumulated over ${\le}6$ calibration frames, disjoint from evaluation. The attack is robust to both knobs: a single calibration frame already collapses the discrete policy at $K{=}3$ (identical to $6$ or $20$ frames), and either extreme direction works (push-high $0\%$, push-low $5\%$ at $K{=}3$, $n{=}20$), so the attack does not depend on tuned hyperparameters.

\textbf{Metrics.} (a) \emph{Open-loop action deviation}: mean per-step $L_2$ between corrupted and clean executed actions on held-out frames (baseline action scale $0.804$ OFT $/$ $0.169$ discrete $/$ $0.638$ \pizero{}). (b) \emph{Closed-loop success rate} (SR): the fraction of successful rollouts. Headline endpoints use $n{=}30$--$50$ with Clopper-Pearson (exact binomial) $95\%$ intervals on $0/n$ collapses, and the random-flip control reports a 5-seed mean$\pm$std. (c) \emph{Safety predicates}: per-step gripper inversion, out-of-envelope motion, and direction reversal for targeted diagnostics. Each closed-loop cell aggregates $\mathrm{NT}$ tasks $\times\,\mathrm{EP}$ episodes. All simulated experiments use one NVIDIA A800; the physical-robot runtime is described separately in \S\ref{ss:attack}. Clean SR varies across experiments (discrete $88$--$90\%$, \pizero{} $57$--$70\%$) because attacks use different small task--episode subsets; every attack is compared with the matched clean baseline from the same subset.

\textbf{Reproducibility.} The appendices document the statistical protocol and additional diagnostics; ancillary code provides the attack, evaluation, analysis, and real-robot patching implementations. The \pizero{} direction sweep fixes seed 20260722, three calibration frames, two ranking-noise draws, and identical 30 rollout slots across directions. The fixed-path study independently reranks ten seeds (20260730--39), each with 17 held-out frames and three shared sampler-noise draws. These jobs use Ubuntu 22.04, Python 3.12, PyTorch 2.5, Transformers 4.53, LeRobot 0.4.4, and LIBERO 0.1.0; checkpoints, benchmark assets, calibration observations, robot videos, and experiment outputs are not redistributed.

\subsection{The attack}\label{ss:attack}
\paragraph{Na\"ive flips (premise).}
In bf16 a single magnitude-targeted exponent-MSB flip already moves the OFT action by 0.306 (38\% of scale) and ${\ge}10$ random flips produce NaNs (a denial-of-service), but this is a floating-point artifact. In the realistic INT8 regime, bounded values eliminate NaNs and na\"ive attacks fail: 300 random and 100 magnitude-targeted flips both leave the action within ${\sim}0.004$ of clean (Fig.~\ref{fig:premise}), mirroring the BFA literature. Does INT8 also resist a gradient-ranked search?
\begin{figure}[t]\centering\includegraphics[width=\linewidth]{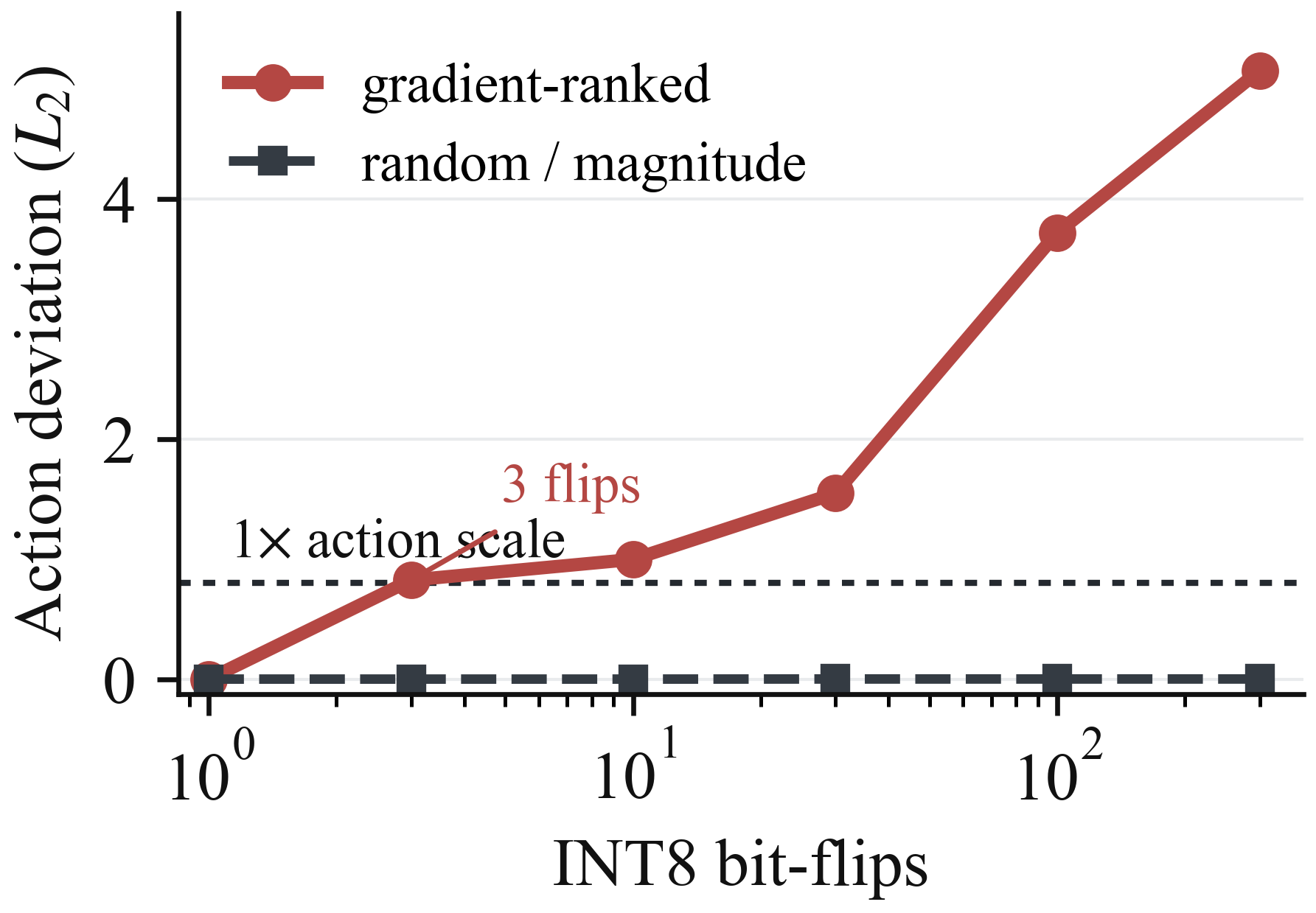}
\caption{Three gradient-ranked INT8 flips move OpenVLA-OFT a full action scale; 300 na\"ive flips barely move it.}\label{fig:premise}\end{figure}

\paragraph{Gradient-ranked bit search.}
Full-model gradient-PBS on OFT, over all 228 LLM-decoder + action-head linears: $K{=}3\to$ deviation 0.828 (${\approx}$full scale), $K{=}10\to1.00$, $K{=}100\to1.13$, three orders of magnitude above na\"ive at matched budgets. The selected flips are dominated by LLM Layer 1 (earliest decoder layer) and the action head, the highest-leverage points on this direct head's action pathway.

\paragraph{Closed-loop task failure.}
Closed-loop LIBERO rollouts of discrete OpenVLA yield the suite-level results below:
\begin{table}[t]\centering\small
\setlength{\tabcolsep}{2.8pt}
\begin{tabular}{lcccc}\toprule
suite & clean & min $K$ & 95\% upper & rand.-300\\\midrule
Spatial ($n{=}50$) & 88.0\% & 3 & ${\le}7.1\%$ & 80.0\%\\
Object ($n{=}30$) & 80.0\% & 5 & ${\le}11.6\%$ & n/a \\
Goal ($n{=}30$) & 83.3\% & 2 & ${\le}11.6\%$ & n/a \\
Long ($n{=}30$) & 50.0\% & 1 & ${\le}11.6\%$ & n/a \\\bottomrule
\end{tabular}
\caption{Closed-loop attack results on four LIBERO suites.}\label{tab:sr}\end{table}
The discrete robot fails every task after 3 gradient-selected flips on LIBERO-Spatial ($0/50$ successes). The random-flip control is decisive: 300 random flips leave success near clean (80\% vs.\ 88\%, over 5 seeds $90.0\%\pm5.4\%$, never collapsing), confirming the loss-aligned search, not the flip count, is the cause. Three flips induce only a modest per-step deviation (0.105) yet drive closed-loop success to zero: errors compound over the ${\sim}200$-step horizon. Consistent with horizon compounding, the longest-horizon suite (LIBERO-Long) collapses at a single flip (Table~\ref{tab:sr}). The continuous-regression OFT head is even more fragile: a single gradient flip in its small, high-leverage action head collapses closed-loop success. Not all architectures are equally fragile, though: the flow-matching \pizero{} requires a different objective and a larger budget (\S\ref{ss:flowanalysis}).

\paragraph{Progressive-ranking check.}
Although the main attack ranks once, we also run canonical PBS with a fresh gradient and ranking after every discrete-policy flip. Open-loop deviation is $0.017$, $0.090$, and $0.153$ at $K{=}1,2,3$, respectively, and every selected bit remains in LLM L1. Thus the rapid low-budget damage and early-layer localization are not artifacts of a frozen ranking.

\paragraph{All-linear weight quantization and lower precision.}
Table~\ref{tab:quant} summarizes the deployment-quantization checks. Quantizing every linear weight to INT8 preserves clean behavior and the direct-versus-flow budget gap. Lower precision does not remove the threat: per-channel INT4 reduces the discrete collapse budget to two flips, while usable group-wise INT4 preserves \pizero{}'s ${\sim}100$-flip regime. For \pizero{}, per-channel INT4 reduces clean SR to 10.0\% as denoising accumulates quantization error, so we exclude it from attack-budget comparisons.
\begin{table}[t]\centering\small
\setlength{\tabcolsep}{4.5pt}
\begin{tabular}{llcc}\toprule
policy & stored-weight quantizer & clean SR & break \(K\)\\\midrule
discrete & INT8, all 437 linears & 88.9\% & 3\\
\pizero{} & INT8, all 422 linears & 56.7\% & ${\sim}100$\\
discrete & INT4, per-channel & 76.7\% & 2\\
\pizero{} & INT4, group-wise & preserved & ${\sim}100$\\
\pizero{} & INT4, per-channel & 10.0\% & n/a\\\bottomrule
\end{tabular}
\caption{Attack budgets under INT8 and INT4 quantization.}\label{tab:quant}
\end{table}
For discrete INT4, SR drops from 76.7\% clean to 46.7\% at \(K{=}1\) and 0\% at \(K{=}2\). These comparisons isolate stored-weight precision while ranking and arithmetic remain floating point.

\paragraph{Cross-benchmark transfer.} We also attack a different benchmark, embodiment, and checkpoint: the OXE-pretrained \texttt{openvla-7b} base policy on SimplerEnv (ManiSkill2/SAPIEN, Google Robot). On pick-coke-can, SR is $20.8\%\to0\%$ at $K{=}3$ (random-300: $29.2\%$). On the stronger move-near task it is $58.3\%\to12.5\%$ at $K{=}3$, $4.2\%$ at $K{=}5$, and $0\%$ at $K{=}10$ (random-300: $41.7\%$), all at $n{=}24$ (Table~\ref{tab:simpler}). The higher-clean move-near task provides the stronger evidence; together, the results demonstrate transfer across benchmark, embodiment, and OXE- versus LIBERO-finetuned checkpoints.
\begin{table}[t]\centering\small
\begin{tabular}{lccc}\toprule
SimplerEnv task & clean & flips $\to$ 0\% & random \\\midrule
pick-coke-can & 20.8\% & 3 & 29.2\% \\
move-near & 58.3\% & 10 & 41.7\% \\\bottomrule
\end{tabular}
\caption{SimplerEnv attack results ($n{=}24$ per task).}\label{tab:simpler}
\end{table}

\paragraph{Targeted action diagnostics.}
On held-out traces of the discrete policy, a target-action variant concentrates at low $K$ on the most-sensitive gripper coordinate: $K{=}3$ inverts the gripper command on $99\%$ of steps while leaving arm-motion deviation at ${\sim}0.003$. At $K{=}30$, $60\%$ of commands leave the clean envelope and $41\%$ reverse direction. We report these outcomes as per-step open-loop safety diagnostics; closed-loop targeted behavior is outside this evaluation.

\paragraph{Trigger-conditioned stealthiness (negative result).}
We additionally test a T-BFA-style goal: preserve clean behavior but alter it under a checkerboard trigger. The trigger separates clean and triggered attack gradients (cosine ${\approx}0.24$) while leaving patch-only success near clean. An iterative search with a clean-preservation penalty keeps the clean action-token proxy within ${\pm}3$ bins and drives the triggered proxy away over 30 flips, yet clean closed-loop SR still falls to $0\%$ at $K{=}10$--$30$. Thus we do not obtain a stealthy triggered backdoor.

\paragraph{Real-robot study.} On a 6-DoF place-block-in-bowl task, a collaborator patched a real-trained $\pi_{0.5}$ with task-calibrated dequantized values equivalent to \(K{=}100\) INT8 flips (99 readout coordinates, one expert). The patch yields $0/20$ successes versus $14/20$ clean and $16/20$ equal-count global-random (Fig.~\ref{fig:realrobot}; exact 95\% CIs: $[0,16.8]\%$, $[45.7,88.1]\%$, and $[56.3,94.3]\%$). Two-sided Fisher tests give $p<10^{-5}$ against either control; clean and random do not differ ($p=.716$). The directed list contains 100 distinct weight coordinates: 99 in \texttt{action\_out\_proj} and one in expert-L17, with 96 sign-bit and four bit-6 flips. The global-random list also uses 100 distinct coordinates but spans all eight bit positions. This study validates task-calibrated logical weight corruption on a real robot using an equal-count global-random control; end-to-end fault delivery and cross-task transfer remain open.
\begin{figure}[t]\centering\includegraphics[width=\columnwidth]{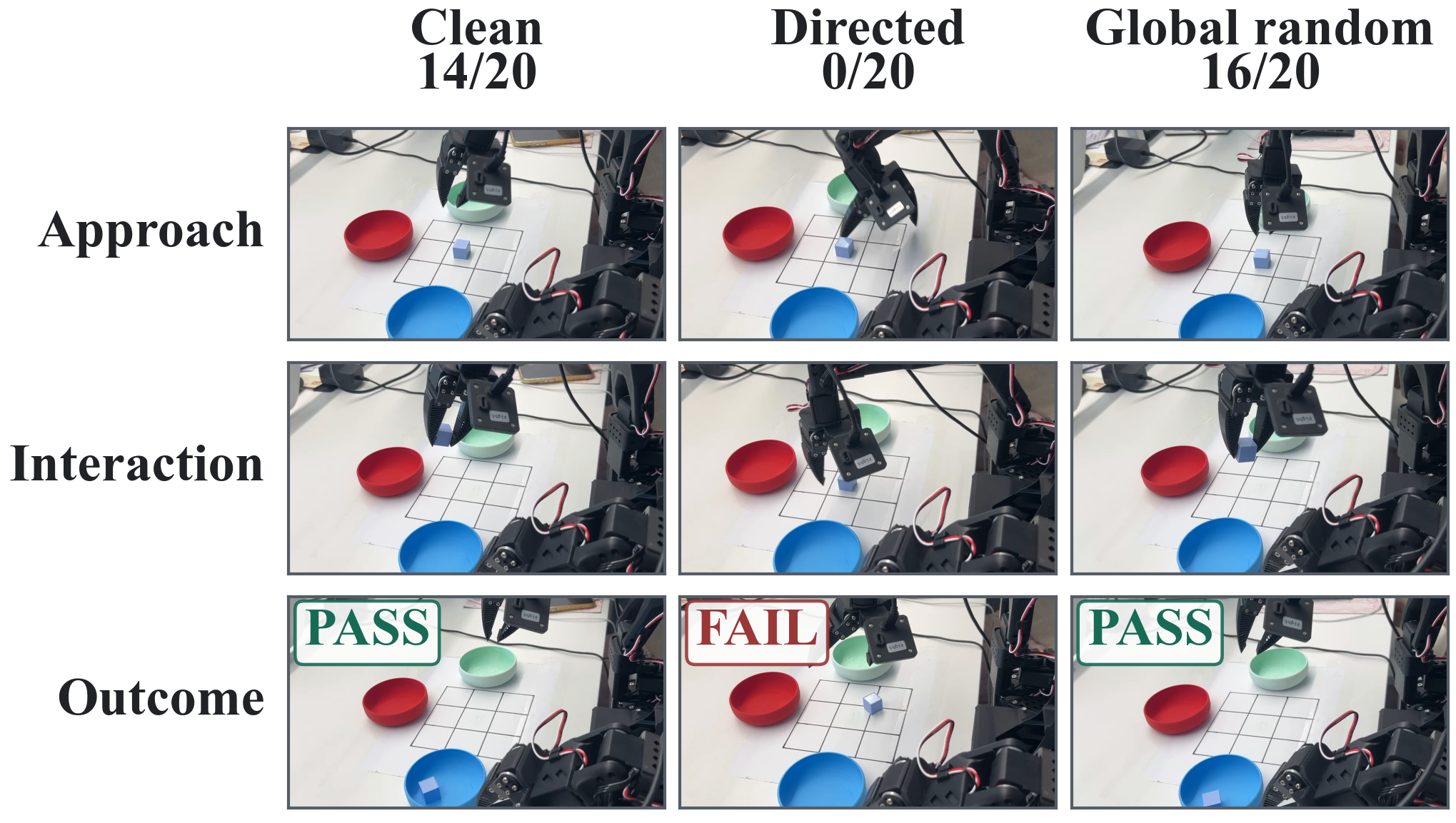}
\caption{Physical-robot rollouts on the calibrated blue-bowl task. Headers report 20-trial success; rows show stages under emulated INT8-equivalent patches.}\label{fig:realrobot}\end{figure}

\subsection{Defense: Localized Integrity Protection}\label{ss:defense}
We protect weights by removing them from the candidate set and rebuilding the attack, so every result is adaptive rather than a replay of fixed flips. Open loop, protecting the intuitive action head alone is useless: the attack pivots to L1 and still breaks at $K{=}3$ (Table~\ref{tab:defopen}). Adding L1 raises the threshold to $K{=}100$ ($33\times$), while expanding coverage to L0--5 provides no further gain.
\begin{table}[t]\centering\small
\begin{tabular}{lccc}\toprule
protection & weights & break $K$ & dev.@$K{=}3$\\\midrule
none & 0.0\% & 3 & 0.83\\
action head & 2.3\% & 3 & 0.75\\
head + L1 & 5.3\% & 100 & 0.14\\
L0--5 & 20.6\% & 100 & 0.12\\\bottomrule
\end{tabular}
\caption{Adaptive open-loop defense (break: dev.\ ${>}0.5$).}\label{tab:defopen}
\end{table}

Closed loop requires only L1 (3.1\%): success remains 80\%, 83\%, and 60\% at $K{=}3,10,100$ (Table~\ref{tab:def}). At matched \(K{=}3\), specific protection yields \(24/30\) successes versus \(0/30\) unprotected and \(3/30\) for an equal-size random slice (two-sided Fisher \(p=3.3{\times}10^{-11}\) and \(5.6{\times}10^{-8}\), respectively). After L1 is excluded, the rebuilt ranking redistributes to later layers (L9, L19, and L23), yet \(18/30\) trials still succeed at \(K{=}100\) (exact 95\% CI: \([40.6,77.3]\%\)). Survival through \(K{=}100\) therefore establishes an at-least \(33\times\) increase in the tested attack budget.
\begin{table}[t]\centering\small
\setlength{\tabcolsep}{3.0pt}
\begin{tabular}{lcccc}\toprule
protection & weights & SR@3 & SR@10 & SR@100\\\midrule
none & 0.0\% & 0\% & 0\% & --\\
early layer (L1) & 3.1\% & 80\% & 83\% & 60\%\\
random slice & 3.0\% & 10\% & n/a & -- \\\bottomrule
\end{tabular}
\caption{Adaptive closed-loop defense (clean 90\%, $n{=}30$).}\label{tab:def}
\end{table}
\paragraph{Scope and full-cost alternatives.} Localized protection does not transfer to \pizero{}: protecting 14\% of its expert fails as the attack redistributes, so flow heads need broader coverage but already require ${\sim}30\times$ more flips. Full checksums or ECC are the 100\%-coverage endpoint of our defense. For a 7B-parameter INT8 model, one parity byte per eight data bytes alone adds roughly $0.9$\,GB, before verification cost; protecting the identified $3.1$--$5.3\%$ instead raises the tested direct-head budget while touching far less state. Weight shuffling provides no integrity by itself: a static layout may be recovered during memory templating, whereas repeatedly moving gigabytes of weights is bandwidth-heavy. It is therefore complementary to selective protection rather than an equal-cost substitute.

\subsection{Where Damaging Bits Concentrate}\label{ss:underbelly}
The highest-gain bits occupy small action-generating subsets, but their exact layer depends on the architecture.
\paragraph{Localization across architectures.}
On discrete OpenVLA, $K{=}3$ gives deviation 0.105 and $0\%$ closed-loop SR, with top flips in LLM L1. On \pizero{}, $K{=}3$ gives 0.363 deviation, with flips in the action expert; the primary objective favors expert-L17 across 5 noise seeds rather than the vision-language backbone (Fig.~\ref{fig:soft}). Thus high-leverage layers are sparse but architecture dependent.
\begin{figure}[t]\centering
\includegraphics[width=\columnwidth]{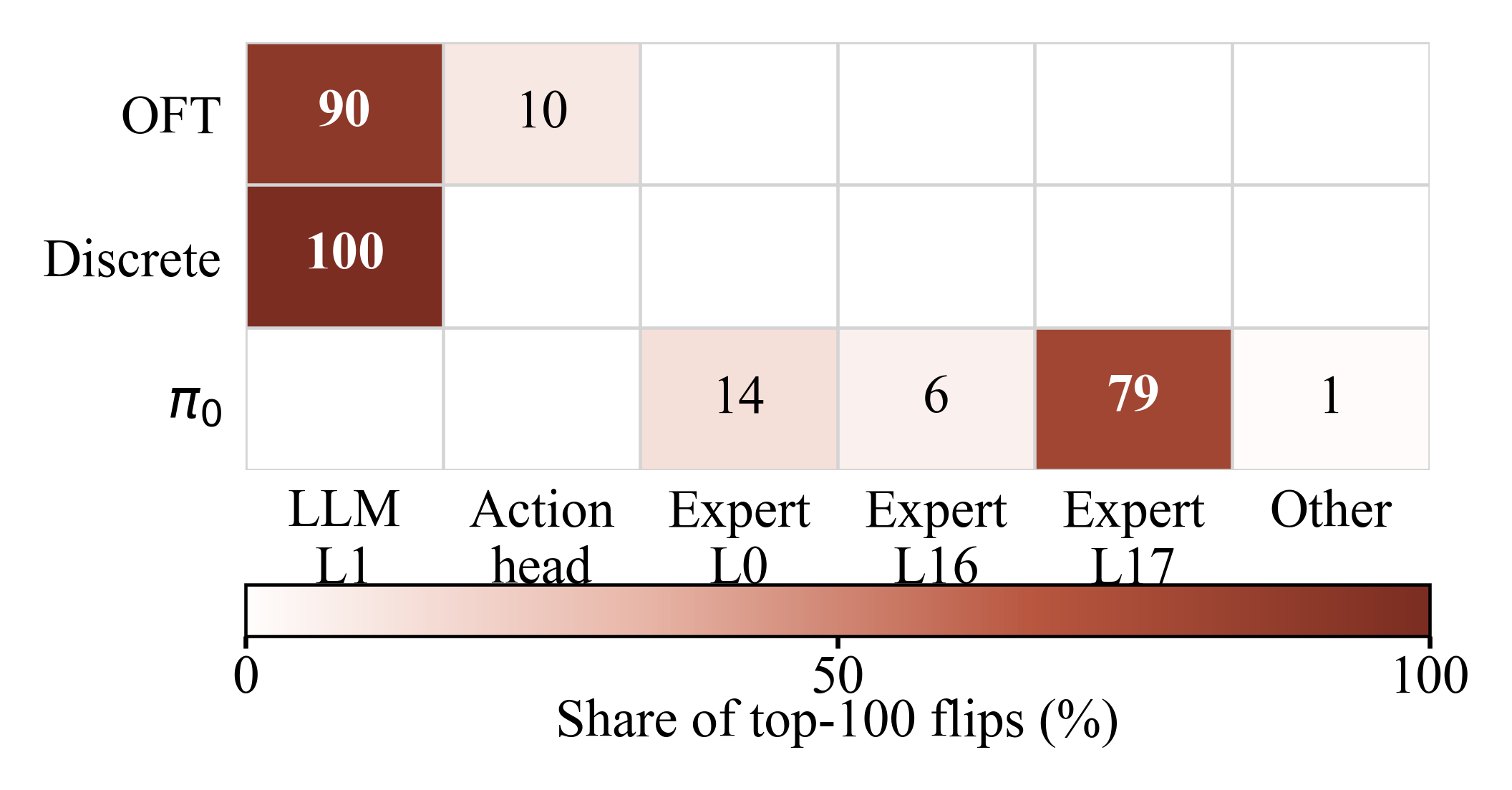}
\caption{Layer localization among the top-100 flips (blank cells are zero). Direct heads concentrate in LLM L1, whereas \pizero{} concentrates in its final expert layer (5-seed mean).}\label{fig:soft}
\end{figure}
This localization result is distinct from the budget spectrum: it identifies \emph{where} an attack enters each model, not why one decoder needs more flips than another.

\subsection{Flow-Head Objective Analysis}\label{ss:flowanalysis}
We next isolate the role of the executed-action objective on the evaluated flow heads.
\paragraph{Complete flow-head budget curves.}
Table~\ref{tab:flowbudgets} makes the primary fixed-direction endpoints explicit. On \pizero{}/Spatial, the attack removes roughly half the clean successes by \(K{=}30\) and reaches \(0\%\) at ${\sim}100$; on Object it leaves only \(3.3\%\) at \(K{=}100\). The distinct $\pi_{0.5}$ reaches \(0\%\) at \(K{=}300\), supporting the empirical ${\sim}100$--$300$ flow-head regime.
\begin{table}[t]\centering\small
\setlength{\tabcolsep}{2.0pt}
\begin{tabular}{lccc}\toprule
policy/suite & clean & \shortstack{directed\\SR@\(K\)} & \shortstack{random\\SR@\(K\)}\\\midrule
\pizero{}/Spatial & 70.0\% & \shortstack{${\sim}\tfrac{1}{2}$ clean@30;\\0.0\%@${\sim}100$} & ${\sim}$clean@300\\
\pizero{}/Object & 86.7\% & 3.3\%@100 & 70.0\%@100\\
$\pi_{0.5}$/Spatial & 93.3\% & \shortstack{66.7\%@30; 30.0\%@100;\\0.0\%@300} & 96.7\%@300\\\bottomrule
\end{tabular}
\caption{Closed-loop flow-head budgets ($n{=}30$).}\label{tab:flowbudgets}
\end{table}
Exact \(n{=}30\) counts for the latter two rows are \pizero{}/Object \(26/1/21\) clean/directed/random at \(K{=}100\), and $\pi_{0.5}$/Spatial \(28/30\) clean, \(20/9/0\) directed at \(K{=}30/100/300\), and \(29/30\) random-\(300\).
For the controlled objective comparison on \pizero{}/Spatial, isotropic energy instead requires ${\sim}1000$ flips to reach \(0\%\) (\(n{=}50\); random 76.0\%), ten times the fixed-direction budget.

\begin{table}[t]\centering\small
\renewcommand{\arraystretch}{0.88}
\begin{tabular}{lccc}\toprule
fixed direction & successes & SR $\downarrow$ & 95\% CI \\\midrule
clean (no flips) & 18/30 & 60.0\% & [40.6, 77.3] \\
$+\mathbf{1}$ & 2/30 & 6.7\% & [0.8, 22.1] \\
$-\mathbf{1}$ & 8/30 & 26.7\% & [12.3, 45.9] \\
random $u_0$ & 2/30 & 6.7\% & [0.8, 22.1] \\
random $u_1$ & 2/30 & 6.7\% & [0.8, 22.1] \\
random $u_2$ & 4/30 & 13.3\% & [3.8, 30.7] \\\bottomrule
\end{tabular}
\caption{Matched fixed-direction \pizero{} trials at $K{=}100$ ($n{=}30$).}\label{tab:directions}
\end{table}

\paragraph{Objective-direction robustness.}
We repeat the $K{=}100$ \pizero{}/LIBERO-Spatial attack with $+\mathbf{1}$, $-\mathbf{1}$, and three independent Gaussian unit directions. Each receives a fresh one-shot ranking; 414 candidate linears, three calibration frames, two noise draws, quantization, and the same 30 task--episode seeds are fixed.

From the matched $18/30$ clean rate, the random directions yield $2/30$, $2/30$, and $4/30$ successes; $-\mathbf{1}$ is weaker ($8/30$; Table~\ref{tab:directions}). After Bonferroni correction across the five directions, $+\mathbf{1}$ and all random directions remain significant (adjusted $p\le2.6{\times}10^{-3}$), whereas $-\mathbf{1}$ does not (raw $p=.031$, adjusted $p=.154$). Thus direction affects strength, but the result is not all-positive-specific. Every ranking stays in the action expert, usually with 81--92 flips in L17; one random direction splits between L0 (44) and L17 (43).

\begin{table}[t]\centering\small
\begin{tabular}{lcc}
\toprule
attack & direct, $K{=}3$ & \pizero{} flow \\
\midrule
clean & 90\% & 70\% \\
random-300 & 90\% & ${\sim}$clean \\
magnitude-3 & 90\% & ${\sim}$clean \\
token-CE & 0\% & n/a \\
self-deviation & n/a & 76.7\% at $K{=}1000$ \\
isotropic $\max\|a\|^2$ & 0\% & 0\% at ${\sim}1000$ \\
manifold escape & 0\% & 0\% at ${\sim}100$ \\
\bottomrule
\end{tabular}
\caption{Head-specific empirical collapse budgets.}
\label{tab:ports}
\end{table}
\paragraph{Attack comparison.} Table~\ref{tab:ports} separates head-specific ports. On the direct head, any gradient-aligned objective collapses at $K{=}3$, whereas random and magnitude baselines remain at clean SR; without iterative decoding, objective design matters little. On \pizero{}, the direct self-deviation port remains at 76.7\% SR at $K{=}1000$ because its clean-initialization gradient is zero. The meaningful nonzero-gradient comparison is therefore isotropic energy versus manifold escape, which require ${\sim}1000$ and ${\sim}100$ flips, respectively.

\paragraph{Step-effect consistency across rerankings.}
Across ten \(K{=}100\) rerankings (51 shared traces each), fixed-direction coherence is \(1.000\) versus \(0.969\) for energy. Differences are positive/tied in \(7/3\) cases (one-sided Wilcoxon \(p{=}.0078\)), with mean \(0.031\) (95\% CI \([-0.019,0.081]\)).

\paragraph{Architecture-level synthesis.}
Together, the results separate \emph{where} faults enter from \emph{how} they propagate: localization guides protection, while decoder architecture and directional consistency shape the empirical closed-loop budget.

\paragraph{Limitations and ethics.}
Reported \(K\) budgets measure logical INT8 susceptibility, not hardware-independent margins; physical fault delivery and ECC remain outside scope. This dual-use study is intended to motivate integrity protection.

\section{Conclusion}
Selected INT8 flips collapse random-fault-tolerant VLAs: 1--5 for direct heads versus ${\sim}100$--$300$ for the evaluated flow heads. Fixed-direction manifold escape cuts \pizero{}'s budget tenfold, works across five directions, and guides protection through $K{=}100$. Physical-robot trials validate task-calibrated failure. This architecture-aware view links attack construction to selective protection. It also motivates decoder-aware fault evaluation before deployment. For direct heads, localized integrity checks offer a practical starting point. Weight integrity is therefore a VLA security boundary.

\newpage
\input{appendix}
\end{document}

%% file: author_block.tex
\author{
Yudong Gao\textsuperscript{\rm 1}\equalcontrib,
Linghan Chen\textsuperscript{\rm 2}\equalcontrib,
Wenhan Wu\textsuperscript{\rm 3},
Mia Zhou\textsuperscript{\rm 4},\\
Jiyao Wang\textsuperscript{\rm 5},
Kaiyan Ji\textsuperscript{\rm 2},
Mingyu Guo\textsuperscript{\rm 2},
Honglong Chen\textsuperscript{\rm 6}\corresponding
}
\affiliations{
\textsuperscript{\rm 1}The Hong Kong University of Science and Technology\\
\textsuperscript{\rm 2}Adelaide University\\
\textsuperscript{\rm 3}Wuhan University\\
\textsuperscript{\rm 4}University of North Carolina at Chapel Hill\\
\textsuperscript{\rm 5}ETH Zurich\\
\textsuperscript{\rm 6}China University of Petroleum (East China)\\
yudonggao0504@163.com
}

%% file: appendix.tex
\appendix
\section{Additional Statistical and Reproducibility Details}

\subsection{Code and environment}
The ancillary code contains reusable INT8 quantization and bit-ranking utilities; implementations for the direct, discrete-token, \pizero{}, and $\pi_{0.5}$ experiments; analysis code for the fixed-direction study; and the real-robot patch-and-restore utility. It intentionally excludes checkpoints, benchmark assets, calibration observations, selected patch lists, robot videos, and experiment outputs. The \pizero{} jobs used one NVIDIA A800 under Ubuntu 22.04.3, Python 3.12.12, PyTorch 2.5.0, Transformers 4.53.3, LeRobot 0.4.4, and LIBERO 0.1.0. The direction sweep fixes seed 20260722, three calibration frames, two ranking-noise draws, and 30 rollout slots per condition. The multi-seed fixed-path study independently reranks seeds 20260730--39, using three calibration frames, two ranking-noise draws, and 17 held-out frames with three shared evaluation-noise draws per seed.

\subsection{Real-robot aggregate results}
Table~\ref{tab:robotstats} aggregates the blue-bowl trials used in the main paper. Confidence intervals are two-sided Clopper--Pearson intervals.
\begin{table}[H]
\centering
\small
\begin{tabular}{lccc}
\toprule
condition & success & SR & exact 95\% CI \\
\midrule
clean & $14/20$ & 70.0\% & $[45.7,88.1]\%$ \\
directed $K{=}100$ & $0/20$ & 0.0\% & $[0.0,16.8]\%$ \\
global-random $K{=}100$ & $16/20$ & 80.0\% & $[56.3,94.3]\%$ \\
\bottomrule
\end{tabular}
\caption{Physical-robot success on the blue-bowl task.}
\label{tab:robotstats}
\end{table}
Two-sided Fisher exact tests give $p=3.34\!\times\!10^{-6}$ for directed versus clean and $p=1.54\!\times\!10^{-7}$ for directed versus global-random; clean versus global-random is not distinguishable ($p=0.716$). These comparisons do not establish task transfer, and the random control is not layer/bit matched.

\subsection{Fixed-path consistency diagnostic}
At $K{=}100$, corrupted velocities are re-evaluated at every clean denoising state, and coherence is $|\sum_k c_k|/\sum_k|c_k|$. A single expanded ranking paired on 17 held-out frames and three shared noise draws ($n{=}51$ traces) gave coherence $1.000$ versus $0.998$ and did not resolve a difference ($p=.159$). Because the independently rebuilt ranking is the relevant experimental unit, we then fixed a ten-seed protocol before inspecting its results. Each seed reranks both objectives and averages the same 51 within-seed traces. Fixed direction has coherence $1.000$ for all ten seeds, whereas energy averages $0.969$: seven paired differences are positive, three are ties, and none is negative (one-sided Wilcoxon $p=.0078$). The mean difference is $0.031$, but its 95\% CI is wide ($[-0.019,0.081]$). Thus the experiment supports a tendency toward more consistent step effects, while the tenfold closed-loop budget gap remains an empirical objective comparison rather than an identified causal mechanism.

Layer identity varies across ranking seeds. Manifold-escape flips usually concentrate in Expert-L17, consistent with the 5-seed localization study, whereas energy often favors Expert-L12 and sometimes Expert-L0. The supported claim is concentration in a small action-expert subset, not invariance of one exact layer.

\subsection{Attack-port comparison}
The direct-head result reduces to gradient-ranked bit search, while the relevant flow-head comparison separates zero-gradient and nonzero-gradient objectives:
\begin{table}[H]
\centering
\small
\begin{tabular}{lcc}
\toprule
attack & direct, $K{=}3$ & \pizero{} flow \\
\midrule
clean & 90\% & 70\% \\
random-300 & 90\% & ${\sim}$clean \\
magnitude-3 & 90\% & ${\sim}$clean \\
token-CE & 0\% & n/a \\
self-deviation & n/a & 76.7\% at $K{=}1000$ \\
isotropic $\max\|a\|^2$ & 0\% & 0\% at ${\sim}1000$ \\
manifold escape & 0\% & 0\% at ${\sim}100$ \\
\bottomrule
\end{tabular}
\caption{Head-specific empirical collapse budgets.}
\label{tab:ports-appendix}
\end{table}

\section{Exploratory Flow-Attenuation Analysis}
These exploratory diagnostics probe why the two evaluated flow-matching policies have high attack budgets. They neither isolate decoder dynamics causally nor predict closed-loop bit budgets.

\subsection{Endpoint sensitivity to initial-noise perturbations}
We perturb initial noise by $\epsilon d$ for random unit $d$ and measure $\rho_{\rm end}=\|\Delta a\|/\|\Delta\,\mathrm{noise}\|$ ($n{=}24$ per $\epsilon$). It is $0.16$--$0.38$ for \pizero{} and $0.04$--$0.22$ for $\pi_{0.5}$ over $\epsilon\in\{0.1,0.3,1.0\}$. These finite endpoint ratios show shrinkage but neither identify the symmetric part of $J_x$ nor verify $J_x+J_x^\top\preceq-2\mu I$; $\kappa_{\rm end}=-\ln\rho_{\rm end}$ is descriptive only. They also do not predict budget ordering: $\pi_{0.5}$ has larger first-order per-flip open-loop damage despite its higher closed-loop budget.

\subsection{Solver-depth diagnostic}
We rebuild each attack for $N\in\{2,5,10,20\}$ denoising steps. Increasing $N$ changes discretization, not the learned vector field, so it is not a controlled intervention on contraction. At fixed budget, isotropic open-loop deviation falls roughly $2\times$ ($0.057$ at $N\leq5$ to $0.025$ at $N{=}20$), while manifold escape collapses success at every $N$. This establishes objective efficacy across solver depths but does not identify cancellation or contractivity; moreover, the deviation result is open-loop and the closed-loop sample ($n{=}12$) is too noisy for a stronger claim.

\section{Conditional First-Order Contraction Bound}
This section records a sufficient-condition analysis that helps organize intuition. The condition is not verified by the endpoint diagnostic above, and the bound is not used to estimate any experimental budget.

\begin{proposition}[Budget lower bound from contraction]\label{prop:budget}
Write denoising in forward time $s\in[0,1]$, with $\dot x_s=v_\theta(x_s,s,c)$ and executed action $a=x_1$. Partition weights into blocks $\theta=\{\theta_\ell\}$ and let $L_\ell:=\sup_s\|\partial_{\theta_\ell}v_\theta\|$. Suppose $J_x=\partial v_\theta/\partial x$ satisfies $J_x+J_x^\top\preceq-2\mu I$ along the sampling trajectory for $\mu>0$. Then the first-order action variation satisfies
\[
\|\delta a\|\le C(\mu)\sum_\ell L_\ell\|\delta\theta_\ell\|,\qquad C(\mu)=\frac{1-e^{-\mu}}{\mu}.
\]
If each of $K$ INT8 flips changes one weight by at most $b$ and $L_{\max}=\max_\ell L_\ell$, reaching first-order open-loop deviation $D$ requires $K\ge D/[L_{\max}bC(\mu)]$.
\end{proposition}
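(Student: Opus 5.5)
The plan is to linearize the flow ODE in the weights and bound the resulting sensitivity equation with a one-sided Lipschitz (logarithmic-norm) estimate. Write $\delta x_s$ for the first-order variation of the trajectory under a weight perturbation $\delta\theta$. Differentiating $\dot x_s=v_\theta(x_s,s,c)$ gives the variational equation
\[
\dot{\delta x_s}=J_x(s)\,\delta x_s+\sum_\ell \partial_{\theta_\ell}v_\theta(x_s,s,c)\,\delta\theta_\ell,\qquad \delta x_0=0,
\]
since the initial noise $x_0$ is fixed and independent of $\theta$. The executed action is $a=x_1$, so $\delta a=\delta x_1$.

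\textbf{Step 1 (energy estimate).} I would set $r(s)=\|\delta x_s\|$ and differentiate $\tfrac12 r^2$. This gives $r\dot r=\delta x^\top J_x\delta x+\delta x^\top f(s)$, where $f(s)=\sum_\ell\partial_{\theta_\ell}v\,\delta\theta_\ell$. The hypothesis $J_x+J_x^\top\preceq-2\mu I$ yields $\delta x^\top J_x\delta x\le-\mu r^2$. Applying Cauchy--Schwarz to the forcing term then gives
\[
\dot r\le-\mu r+\|f(s)\|,\qquad \|f(s)\|\le F:=\sum_\ell L_\ell\|\delta\theta_\ell\|,
\]
using the definition of $L_\ell$ as a supremum over $s$.

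\textbf{Step 2 (Grönwall).} Comparison with the scalar ODE $\dot y=-\mu y+F$, $y(0)=0$, gives
\[
r(1)\le\int_0^1 e^{-\mu(1-\sigma)}F\,d\sigma=F\,\frac{1-e^{-\mu}}{\mu}=C(\mu)F,
\]
which is the first claimed inequality.

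\textbf{Step 3 (budget).} A set of $K$ flips, each changing one scalar weight by at most $b$, satisfies $\sum_\ell\|\delta\theta_\ell\|\le Kb$. This follows from the triangle inequality, since $\|\delta\theta_\ell\|\le\sum_{\text{flips in }\ell}b$ for any norm in which coordinate vectors have unit norm. Therefore $D\le\|\delta a\|\le C(\mu)L_{\max}Kb$, and rearranging gives $K\ge D/[L_{\max}bC(\mu)]$. Here $b$ can be taken as $128\max_i s_i$ by the remark after Def.~\ref{def:flip}.

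\textbf{Main obstacle.} The delicate step is Step 1, where $r$ is differentiated. The function $r=\|\delta x\|$ is not differentiable where $\delta x=0$, which includes $s=0$. I would handle this by working with $\sqrt{r^2+\varepsilon}$, or with upper Dini derivatives, and then letting $\varepsilon\to0$. A second point needs care: the contraction condition is only assumed along the sampling trajectory. Because the variation is first order, however, $J_x$ is evaluated only on the clean trajectory, so this assumption suffices. Finally, the norm on $\partial_{\theta_\ell}v$ must be the operator norm compatible with the chosen norm on $\delta\theta_\ell$.
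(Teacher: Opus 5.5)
Your proposal is correct and follows the paper's proof in substance: both linearize to the sensitivity equation with zero initial variation, use $J_x+J_x^\top\preceq-2\mu I$ to get decay $e^{-\mu(1-s)}$, integrate to obtain $C(\mu)$, and finish with $\sum_\ell\|\delta\theta_\ell\|\le Kb$. The only difference is presentation. The paper cites the transition-matrix bound $\|\Phi(1,s)\|\le e^{-\mu(1-s)}$ with Duhamel's formula block by block and then sums. You run the equivalent energy/Gr\"onwall estimate directly on the combined forcing, and you also handle the nondifferentiability of $\|\delta x_s\|$ at $\delta x=0$, which the paper leaves implicit.
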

Here $L_\ell$ describes where a perturbation enters the vector field and $C(\mu)$ how a contractive trajectory would attenuate it. Since $C(\mu)\to1$ as $\mu\to0$, the unattenuated first-order case is recovered. This is an open-loop upper bound, not an equality or closed-loop certificate; direct heads lie outside its premise. Task tolerance, bit values, nonlinear interactions, and feedback can dominate, so budgets are measured rather than inferred.

\begin{proof}
For each block, $S_\ell=\partial x_s/\partial\theta_\ell$ obeys $\dot S_\ell=J_xS_\ell+\partial_{\theta_\ell}v$, $S_\ell(0)=0$. The hypothesis gives $\|\Phi(1,s)\|\le e^{-\mu(1-s)}$ and thus $\|S_\ell(1)\delta\theta_\ell\|\le L_\ell C(\mu)\|\delta\theta_\ell\|$. Sum over blocks and use $\sum_\ell\|\delta\theta_\ell\|\le Kb$.
\end{proof}